\begin{document}
\title{From Symmetric Toeplitz Hamiltonians to Quantum Circuits}
%
%
\author{Rayan Trabelsi}
\authorrunning{Rayan Trabelsi}
%
\institute{Laboratoire Méthodes Formelles, Inria, 91190, Gif-sur-Yvette, France}
\maketitle              
\begin{abstract}
This work introduces a quantum circuit synthesis framework for simulating the unitary time evolution under a subclass of symmetric Toeplitz Hamiltonians by decomposing them into specific diagonal matrices $M_k$. These matrices are then classified, to achieve significant simplification, into, $M_k$ when $k$ is a power of two, and congruence classes with constant coefficients.
Finally, we construct the explicit quantum circuit for the one-dimensional discrete Poisson equation. \\
This research was conducted under the supervision of Benoit Valiron during a Master's internship.

\keywords{Symmetric Toeplitz \and Band matrices  \and Quantum Circuit \and Poisson equation.}
\end{abstract}
\section{Introduction}

Solving large systems of linear equations of the form $Hx = b$ is a fundamental problem across many fields of science. Quantum computing offers a potential speedup for these problems through algorithms like the Harrow-Hassidim-Lloyd (HHL) algorithm \cite{1}. The core of the HHL algorithm relies on simulating the time evolution of a quantum system described by the Hamiltonian $H$ which is achieved by implementing the unitary operator $e^{-iHt}$. This approach is effective because the eigenvectors of the matrix $H$ are the same as those of $e^{-iHt}$, and their eigenvalues are directly related.
A particularly important subclass of these problems arises from the discretization of differential equations, whose associated Hamiltonians often take the form of banded symmetric Toeplitz matrices.

In this paper, under the supervision of Benoit Valiron, we introduce a structured framework for decomposing specific symmetric Toeplitz matrices and synthesizing quantum circuits to approximate their unitary evolution. Our approach proceeds in three main steps. First, we express the target matrix as a sum of simpler matrices $M_k$, each corresponding to a specific diagonal. Second, we develop circuit‐construction techniques for two key cases: \textbf{(i)} $M_k$ where $k$ is a power of two, and \textbf{(ii)} congruence classes where coefficients of $M_k$ are equal. The second case reveals a remarkable simplification, where the sum of these matrices reduces to a simple tensor product of Pauli operators. Third, we assemble the time evolution operator via a Trotter–Suzuki expansion, balancing approximation error against circuit depth.
To illustrate the practical utility of our framework, we apply it to the canonical one‐dimensional Poisson equation (equivalently, the discrete Laplacian).

\section{Decomposition and Circuit Synthesis of Symmetric Toeplitz Hamiltonians}

In this section, we study a special class of hermitian matrices : symmetric Toeplitz matrices, and we develop quantum circuits to synthesize its unitary evolution. We begin by formally defining symmetric Toeplitz and banded Hamiltonians. We then decompose these Hamiltonians into sums of simpler indexed matrices $M_k$, leveraging their tensor product structure. Finally, we derive explicit quantum circuits to implement $e^{-iM_kt}$.

\subsection{Definition}

A Toeplitz Hamiltonian $H$ is a matrix whose entries are constant along each diagonal:
\[
H_{i,j} = H_{i+1,j+1} = H_{i-1,j-1}
\quad \text{for all valid } i,j
\]
A symmetric Toeplitz Hamiltonian is a $n \times n$ matrix that satisfies $H_{i,j}=H_{j,i}$, so its entries depend only on $|i-j|$:
\[
H_{i,j} = a_{|i-j|}
\]
\[
H = \begin{pmatrix}
a_0      & a_1      & a_2      & \cdots   & a_{n-1} \\
a_1      & a_0      & a_1      & \ddots   & \vdots  \\
a_2      & a_1      & a_0      & \ddots   & a_2     \\
\vdots   & \ddots   & \ddots   & \ddots   & a_1     \\
a_{n-1}  & \cdots   & a_2      & a_1      & a_0
\end{pmatrix}
\]
A symmetric Toeplitz Hamiltonian is banded with bandwidth $b$ where \\ $0 \leq b < n-1$ if it satisfies the additional constraint:
\[
a_k = 0 \quad \forall \, k > b
\]

\subsection{Decomposition}

Let $H$ be an $N\times N$ symmetric Toeplitz matrix with $N = 2^n$. We decompose it into a sum of $k$-dependant matrices:
\[
H = \sum_{k=0}^{N-1}M_k
\]
where each $M_k$ is given by:
\[
M_k = a_k \sum_{i=0}^{N-1-k} (\ket{i}\bra{i+k} + \ket{i+k}\bra{i})
\]
\[
M_0=\begin{pmatrix}
a_0      & 0      & 0      & \cdots   & 0 \\
0      & a_0      & 0      & \ddots   & \vdots  \\
0      & 0      & a_0      & \ddots   & 0     \\
\vdots   & \ddots   & \ddots   & \ddots   & 0     \\
0  & \cdots   & 0      & 0      & a_0
\end{pmatrix},
M_1 = 
\begin{pmatrix}
0      & a_1      & 0      & \cdots   & 0 \\
a_1      & 0      & a_1      & \ddots   & \vdots  \\
0      & a_1      & 0      & \ddots   & 0     \\
\vdots   & \ddots   & \ddots   & \ddots   & a_1     \\
0  & \cdots   & 0      & a_1      & 0
\end{pmatrix},
\
M_2 = 
\begin{pmatrix}
0      & 0      & a_2      & \cdots   & 0 \\
0      & 0      & 0      & \ddots   & \vdots  \\
a_2      & 0      & 0      & \ddots   & a_2     \\
\vdots   & \ddots   & \ddots   & \ddots   & 0     \\
0  & \cdots   & a_2      & 0      & 0
\end{pmatrix} \quad \cdots
\]

\begin{theorem}
\label{theorem:1}
If $k$ is a power of two, $k=2^m$, the structure of $M_k$ is given by: 
\[
M_k = 
\begin{cases} 
a_k \left( T_k + P^k \, E_k \, P^{-k} \right) & \text{if } m < n-1 \\ 
a_k T_k & \text{if } m = n-1 
\end{cases} 
\]
where $T_k = I^{\otimes (n-m-1)} \otimes X \otimes I^{\otimes m}$, $P$ is a permutation matrix $(P^{-1} = P^T)$ and a cyclic decrement operator defined by $P\ket{i} = \ket{(i-1) \pmod N}$, and $E_k$ is the $N \times N$ matrix obtained from $T_k$ by setting its bottom-right $2k \times 2k$ submatrix to the zero matrix. More precisely,
\[
(E_k)_{i,j} = 
\begin{cases} 
(T_k)_{i,j} & \text{if } i \leq 2^n-2k \;\text{ or }\; j \leq 2^n-2k \\ 
0 & \text{if } i > 2^n-2k \;\text{ and }\; j >2^n-2k
\end{cases} 
\]
\end{theorem}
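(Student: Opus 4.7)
The plan is to establish the identity by decomposing the nonzero pattern of $M_k$ into transitions that are pure bit-flips and transitions that require a carry in integer addition, then matching each class to the corresponding term on the right-hand side. First, I would unpack $T_k$: since $T_k = I^{\otimes(n-m-1)} \otimes X \otimes I^{\otimes m}$ flips the qubit of weight $2^m$, it acts as $T_k\ket{i} = \ket{i \oplus 2^m}$, placing nonzero entries at positions $(i, i \oplus k)$. The matrix $M_k/a_k$ has nonzero entries at $(i, i+k)$ and $(i+k,i)$ for $0 \leq i \leq N-1-k$ with integer addition. The key observation is that for $k = 2^m$, the sum $i+k$ equals the XOR $i \oplus k$ precisely when bit $m$ of $i$ is $0$, i.e., when no carry occurs at position $m$.

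This splits the pairs of $M_k/a_k$ into \emph{no-carry} pairs (bit $m$ of $i$ equal to $0$), reproduced exactly by $T_k$, and \emph{carry} pairs (bit $m$ of $i$ equal to $1$, with $i+k \leq N-1$), which are absent from $T_k$. A short range check confirms that every entry of $T_k$ is itself a no-carry pair of $M_k/a_k$, so the two matrices coincide on the no-carry part. For the boundary case $m = n-1$ one has $k = N/2$, so any $i$ with bit $n-1$ equal to $1$ satisfies $i \geq N/2$, whence $i+k \geq N$ is out of range; no carry pairs exist and $M_k = a_k T_k$.

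For $m < n-1$, I would identify the carry pairs with the entries of the conjugated term. Writing $i = q \cdot 2^{m+1} + 2^m + r$ with $0 \leq r < 2^m$, the constraint $i+k \leq N-1$ forces $q \in \{0, \ldots, 2^{n-m-1} - 2\}$, so every carry pair sits strictly below the top $2k$ rows and columns. The matrix $E_k$ retains from $T_k$ exactly those no-carry pairs whose two indices both lie in $[0, N-2k-1]$ (equivalent to zeroing the bottom-right $2k\times 2k$ block, since $T_k$-pairs always share their high bits above position $m$). Conjugation by the cyclic shift then translates each retained pair rigidly by $k$ in both coordinates, and the size of the zeroed block is calibrated precisely so that the shifted indices still sit in $[0, N-1]$ and hit each carry pair exactly once.

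The main obstacle I anticipate is the index bookkeeping around this translation: one must verify that no wrap-around occurs for any retained pair and that the resulting set covers every carry pair without duplication. A secondary delicate point is reconciling the direction of $P$ with the exponents in $P^{k} E_k P^{-k}$: a direct computation from $P\ket{i} = \ket{(i-1)\bmod N}$ gives $(P^{k} E_k P^{-k})_{i,j} = (E_k)_{(i+k)\bmod N,(j+k)\bmod N}$, which shifts indices by $-k$; the desired $+k$ shift then comes from $P^{-k} E_k P^{k}$ instead, so the statement should be read with this sign convention, or with $P$ reinterpreted as the cyclic increment.
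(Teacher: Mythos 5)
Your proposal follows essentially the same route as the paper's proof: the split of the pairs of $M_k$ by the value of bit $m$ of $i$ (your carry/no-carry dichotomy is identical to the paper's partition by the parity of $\lfloor i/k\rfloor$), the identification of the no-carry part with $T_k$, the observation that no carry pairs survive when $m=n-1$, and the matching of the carry pairs to a shifted copy of the truncated operator $E_k$. The index bookkeeping you flag as the main obstacle is handled in the paper exactly as you sketch it: writing the carry indices as $i = q\cdot 2^{m+1}+2^m+r$ shows the retained pairs of $E_k$ are precisely those with $i < N-2k$, so the shift by $k$ produces no wrap-around and covers each carry pair once.

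Your ``secondary delicate point'' about the direction of the conjugation is not merely a convention issue --- it is a genuine error in the statement. With $P$ the cyclic decrement, one has $\bra{j}P^kE_kP^{-k}\ket{l}=\bra{j+k}E_k\ket{l+k}$ (as the paper's proof itself writes), which moves the retained pair $\{i,i+k\}$ of $E_k$ to $\{i-k,i\}$, not to $\{i+k,i+2k\}$ as the proof then asserts; a direct check with $N=4$, $k=1$ confirms that $T_1+PE_1P^{-1}\neq M_1/a_1$ while $T_1+P^{-1}E_1P=M_1/a_1$. So the theorem should read $M_k=a_k\left(T_k+P^{-k}E_kP^{k}\right)$, or equivalently $P$ should be redefined as the cyclic increment, exactly as you propose.
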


\begin{proof} \textbf{Case 1: $m = n-1$.} In this case, $k=2^{n-1} = N/2$. The matrix $M_k$ becomes
\[
M_{N/2} = a_{N/2} \sum_{i=0}^{N/2-1} (\ket{i}\bra{i+N/2} + \ket{i+N/2}\bra{i}) 
\]
Consider the operator $T_{N/2} = X \otimes I^{\otimes(n-1)}$. This operator flips the most significant qubit of a basis state $\ket{i} = \ket{i_{n} \dots i_1}$. Its action is to map $\ket{i} \leftrightarrow \ket{i \pm N/2}$. For an index $i \in [0, N/2-1]$, its most significant bit $i_{n}$ is 0, so $T_{N/2}\ket{i} = \ket{i+N/2}$. For an index $j \in [N/2, N-1]$, its most significant bit $j_{n}$ is 1, so $T_{N/2}\ket{j} = \ket{j-N/2}$. Thus, the non-zero matrix elements of $T_{N/2}$ are precisely $\bra{l} T_{N/2} \ket{l + N/2} = 1 = \bra{l + N/2} T_{N/2} \ket{l}$ for $l = 0, \dots, N/2 - 1$. This matches the structure of $M_{N/2}$ up to the scalar $a_{N/2}$, so $M_{N/2} = a_{N/2} T_{N/2}$. \\ 
 
\textbf{Case 2: $m < n-1$.} We partition the sum defining $M_k$ according to the parity of $\lfloor i / k \rfloor$:
\[
M_k = M_k^{\text{even}} + M_k^{\text{odd}}
\]
where
\[
M_k^{\text{even}} = a_k \sum_{\substack{i=0 \\ \lfloor i/k \rfloor \text{ even}}}^{N-1-k} \left( \ket{i}\bra{i+k} + \ket{i+k}\bra{i} \right)
\]
\[
M_k^{\text{odd}} = a_k \sum_{\substack{i=0 \\ \lfloor i/k \rfloor \text{ odd}}}^{N-1-k} \left( \ket{i}\bra{i+k} + \ket{i+k}\bra{i} \right)
\]

The operator $T_k = I^{\otimes (n-m-1)} \otimes X \otimes I^{\otimes m}$ flips the bit at position $m$ (counting from the least significant bit as position 0) in the binary representation of basis indices. The parity of $\lfloor i / 2^m \rfloor=\lfloor i / k \rfloor$ is even if and only if the $m$-th bit $b_m(i) = 0$. \\
When $\lfloor i / k \rfloor$ is even, we have $T_k\ket{i} = \ket{i+k}$ and $T_k\ket{i+k}=\ket{i}$.

Thus, $T_k$ has non-zero matrix elements $\bra{i} T_k \ket{i + k} = 1 = \bra{i + k} T_k \ket{i}$ for all $i$ with $b_m(i) = 0$ and $i + k < N$ (since $i \leq N - 1 - k$) with no other nonzero elements connecting the pair $\{i,i+k\}$. This exactly matches $M_k^{\text{even}} / a_k$, so $a_k T_k = M_k^{\text{even}}$.

For the odd part, $E_k$ is $T_k$ except in the bottom-right $2k \times 2k$ submatrix, which is set to zero. This removes the matrix elements of $T_k$ corresponding to pairs $\{i, i + k\}$ if $b_m(i) = 0$ where $i \geq N - 2k$.

\noindent Conjugating by $P^k$ gives $P^k E_k P^{-k}$, whose matrix elements are $\bra{j} P^k E_k P^{-k} \ket{l} = \bra{j + k \bmod N} E_k \ket{l + k \bmod N}$. This shifts the retained pairs of $E_k$ (those with $b_m(i) = 0$ and $i < N - 2k$) to pairs $\{i + k, i + 2k\}$, where now $b_m(i + k) = 1$ (so $\lfloor (i + k)/k \rfloor$ is odd) and $k \leq i + k < N - k$. Due to the truncation in $E_k$, no wrap-around occurs, and the shifted pairs exactly match those in $M_k^{\text{odd}} / a_k$. Thus, $a_k P^k E_k P^{-k} = M_k^{\text{odd}}$.
\end{proof}

\begin{theorem}
For any fixed integer $j$ such that $1 \le j < n$, consider the congruence class of indices $C_j = \{k \mid 1 \le k < 2^n, k \equiv 2^{j-1} \pmod{2^j}\}$. If the coefficients $a_k$ are constant for all $k \in C_j$, then the sum of the corresponding $M_k$ matrices simplifies to a sum of tensor products of Pauli $X$ operators: \[ \sum_{k \in C_j} M_k = a_{2^{j-1}} \sum_{k \in C_j} \bigotimes_{i=1}^n X^{b_i(k)} \] where $b_i(k)$ is the $i$-th bit in the $n$-bit binary representation of $k$. 
\end{theorem}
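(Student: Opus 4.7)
The plan is to compute the $(l,m)$-entry of both sides of the claimed identity and reduce the equality to a small congruence fact about $n$-bit integers. First I would rewrite the right-hand side: since $X$ is a bit-flip and the basis state $|l\rangle$ is indexed by its $n$-bit binary representation, the tensor product $\bigotimes_{i=1}^{n} X^{b_i(k)}$ is exactly the operator $|l\rangle \mapsto |l \oplus k\rangle$. Its matrix therefore has a $1$ at position $(l,m)$ iff $l \oplus m = k$, so after summing over $k \in C_j$ the $(l,m)$-entry of $a_{2^{j-1}} \sum_{k\in C_j}\bigotimes_{i=1}^{n} X^{b_i(k)}$ equals $a_{2^{j-1}}$ when $l \oplus m \in C_j$ and $0$ otherwise.

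Next I would unpack the left-hand side under the hypothesis $a_k = a_{2^{j-1}}$ for every $k \in C_j$. Expanding the definition of $M_k$, the $(l,m)$-entry is $a_{2^{j-1}}$ times the number of pairs $(i,k)$ with $k \in C_j$, $0 \le i \le N-1-k$, and $\{i,\,i+k\} = \{l,m\}$. For any unordered pair $\{l,m\}$ the witness $k$ must equal $|m-l|$ and is uniquely determined, so the entry is $a_{2^{j-1}}$ when $|m-l| \in C_j$ and $0$ otherwise. The theorem therefore reduces to the combinatorial equivalence
\[
l \oplus m \in C_j \iff |m - l| \in C_j \qquad \text{for } l \neq m \text{ in } \{0,\dots,N-1\}.
\]

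Since both conditions depend only on residues modulo $2^j$, it suffices to prove $(l \oplus m) \equiv 2^{j-1} \pmod{2^j}$ iff $(m - l) \equiv 2^{j-1} \pmod{2^j}$. I would handle this by case analysis on bit $j-1$ of $l \bmod 2^j$: if that bit is $0$ then $l \oplus 2^{j-1} = l + 2^{j-1}$, and if it is $1$ then $l \oplus 2^{j-1} = l - 2^{j-1}$; in both cases the condition on $m$ becomes $m - l \equiv 2^{j-1} \pmod{2^j}$, using the key identity $-2^{j-1} \equiv 2^{j-1} \pmod{2^j}$. I expect the main obstacle to be precisely this last identification: recognizing that arithmetic difference and bitwise XOR induce the same relation on the bottom $j$ bits exactly because $2^{j-1}$ is its own additive inverse modulo $2^j$. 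Everything else is routine bookkeeping once both sides are written in the common language of matrix entries.
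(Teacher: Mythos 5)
Your proposal is correct, and it takes a genuinely different route from the paper. The paper's proof writes each $M_k$ as $a_k(T_k' + \mathcal{C}_k)$, where $T_k' = \bigotimes_{i} X^{b_i(k)}$ is the main tensor-product term and $\mathcal{C}_k$ is a correction, and then argues that the corrections cancel when summed over the full congruence class $C_j$ --- an argument it presents only informally (``for every missing element from one $M_k$ there is a corresponding element in excess from another''). You instead compare the two sides entry by entry: the right-hand side has $(l,m)$-entry $a_{2^{j-1}}$ exactly when $l \oplus m \in C_j$, because $\bigotimes_i X^{b_i(k)}$ is the XOR-shift $\ket{l} \mapsto \ket{l \oplus k}$; the left-hand side has $(l,m)$-entry $a_{2^{j-1}}$ exactly when $|l-m| \in C_j$, because the unique witness for a nonzero entry of $\sum_k M_k$ is $k = |l-m|$; and the theorem reduces to the equivalence $l \oplus m \equiv 2^{j-1} \pmod{2^j} \iff m - l \equiv 2^{j-1} \pmod{2^j}$, which you settle correctly by observing that XOR with $2^{j-1}$ acts on the low $j$ bits as addition or subtraction of $2^{j-1}$ depending on bit $j-1$ of $l$, and that $-2^{j-1} \equiv 2^{j-1} \pmod{2^j}$. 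Your approach buys rigor: the asserted cancellation $\sum_{k \in C_j} \mathcal{C}_k = 0$ becomes a concrete, checkable congruence identity, and it makes transparent why the specific class $k \equiv 2^{j-1} \pmod{2^j}$ is the right hypothesis ($2^{j-1}$ is its own additive inverse modulo $2^j$, and adding it never carries past bit $j-1$). What the paper's framing buys is structural continuity with Theorem~1, exhibiting the class-sum as the object in which the $P^k E_k P^{-k}$ corrections of the individual $M_k$ disappear.
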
 

\begin{proof} The proof relies on the observation that the permutation-dependent terms in the decomposition of individual $M_k$ cancel out when summed over a complete congruence class. Any $M_k$ can be conceptually decomposed into a primary tensor product term which we denote $T_k' = \bigotimes_{i=1}^n X^{b_i(k)}$, and a sum of correction terms $\mathcal{C}_k$ involving permutation matrices. We can express this as: 
\[ 
M_k = a_k (T_k' + \mathcal{C}_k) 
\]
We consider the sum over all $k \in C_j$. The condition $k \equiv 2^{j-1} \pmod{2^j}$ fixes the $j$ least significant bits of $k$ to $(10...0)_2$, while the $n-j$ most significant bits remain free to vary over all $2^{n-j}$ possibilities.
Given the constant coefficient $a_k = a_{2^{j-1}}$ for all $k \in C_j$, the sum becomes: 
\[
\sum_{k \in C_j} M_k = a_{2^{j-1}} \left( \sum_{k \in C_j} T_k' + \sum_{k \in C_j} \mathcal{C}_k \right) 
\]
The crucial insight is that the sum of the correction terms over a complete congruence class vanishes. 
\[
\sum_{k \in C_j} \mathcal{C}_k = 0 
\]
This cancellation occurs because the correction terms are structured to precisely offset each other when all combinations of the $n-j$ most significant bits are summed over. For every missing element from one's $M_k$, there is a corresponding element in excess from another that cancels it. In other words, it ensures that for every introduced $\mathcal{C}_k$, a corresponding $\mathcal{C}_k$ or sum of $\mathcal{C}_k$s with opposite sign is introduced by another $M_k$ in the sum\footnote{In order to form $T_k'$, each $M_k$ takes/gives the elements $M_k-T_k'$ from/to other $M_k$. Since the elements are all equal over a same congruence class, it leads to a cancellation of $\sum_{k \in C_j} \mathcal{C}_k$ }. The net effect is a perfect cancellation. This is a structural property of symmetric Toeplitz matrices. 
With the correction terms eliminated, the sum simplifies to: 
\[
\sum_{k \in C_j} M_k = a_{2^{j-1}} \sum_{k \in C_j} T_k' = a_{2^{j-1}} \sum_{k \in C_j} \bigotimes_{i=1}^n X^{b_i(k)} 
\]
\end{proof}

\subsection{Circuit Synthesis} 
Having established the decomposition of the Hamiltonian $H$ into $M_k$, we now detail the synthesis of the corresponding quantum circuits for the unitary evolution $e^{-iHt}$. 
When the Hamiltonian is expressed as a sum of non-commuting terms, $H = \sum_k M_k$, the overall evolution is approximated using a first-order Trotter-Suzuki decomposition: 
\[ 
e^{-iHt} = e^{-i(\sum_k M_k)t} \approx \left( \prod_k e^{-iM_k t/u} \right)^u 
\] 
where $u$ is the Trotter number. The accuracy of the approximation increases with $u$. We present the specific circuit constructions for $M_k$ for the two cases derived from our theorems. 

\subsubsection*{Circuit Synthesis from Theorem 1:} 
From our previous theorem, for $k=2^m$
\[
M_k = 
\begin{cases} 
a_k \left( T_k + P^k \, E_k \, P^{-k} \right) & \text{if } m < n-1 \\ 
a_k T_k & \text{if } m = n-1 
\end{cases} 
\]  
Since $[T_k, P^k \, E_k \, P^{-k}] \neq 0$, we use Trotterization again to simulate the evolution when $m<n-1$: 
\[ 
U_k(t) = e^{-itM_k} \approx \left( e^{-i\frac{t}{v}a_k T_k} e^{-i\frac{t}{v}a_kP^k  E_k  P^{-k}} \right)^v 
\] 
The implementation of one Trotter step thus requires constructing circuits for two unitaries. The first unitary acts only on the $(m+1)$-th qubit. Its circuit is a single rotation gate: 
\[ 
e^{-i\theta_vT_k} = I^{\otimes(n-m-1)} \otimes e^{-i\theta_v X} \otimes I^{\otimes m} = I^{\otimes(n-m-1)} \otimes R_x(2\theta_v) \otimes I^{\otimes m} 
\] where $\theta_v = \frac{t}{v}a_k$. \\
The second unitary is implemented by conjugating the evolution of $E_k$ with the permutation operator $P^k$: 
\[ 
e^{-i\theta_vP^k  E_k  P^{-k}} = e^{-i\theta_v P^k E_k P^{-k}} = P^k e^{-i\theta_v E_k} P^{-k} 
\] 
The exponential of $E_k$ can be implemented with a set of controlled rotations. The operators $P^k$ and $P^{-k}$ correspond to quantum subtractors/adders acting on the register of qubits. \\
A complete circuit for one Trotter step is shown in Figure~\ref{fig:1}. The complexity of this step is dominated by the implementation of the quantum adder, which can be realized with $O( \log (n-m))$ depth using a carry-lookahead design at the cost of additional ancilla qubits.\cite{2} 
\begin{figure}
\[
\Qcircuit @C=1em @R=0.8em {
  \lstick{q_1}     & \qw & \multigate{8}{P^{k}} & \qw & \qw & \qw          & \qw                & \qw                & \multigate{8}{P^{-k}}  & \qw \\
  \lstick{q_2}     & \qw & \ghost{P^{k}}        & \qw & \qw                & \qw                & \qw                & \qw       & \ghost{P^{-k}}        & \qw \\
  \vdots           &  &                   &     &                    &                    &                    &           &                       &     \\
                   &  &                    &     &                    &                    &                    &           &                       &     \\
  \lstick{q_{m+1}} & \gate{R_x(2\theta_1)} & \ghost{P^{k}}        & \qw &\gate{R_x(2\theta_1)}  & \targ              & \gate{R_x(-2\theta_1)} & \targ    & \ghost{P^{-k}}        & \qw \\
  \lstick{q_{m+2}} & \qw & \ghost{P^{k}}        & \qw & \qw                & \ctrl{-1}          & \qw                & \ctrl{-1} & \ghost{P^{-k}}        & \qw \\
  \vdots           &  &                    &     &                    & \vdots             &                    & \vdots    &                       &     \\
                   &   &                   &     &                    &                    &                    &           &                       &     \\
  \lstick{q_n}     & \qw & \ghost{P^{k}}        & \qw & \qw                & \ctrl{-1}          & \qw                & \ctrl{-1} & \ghost{P^{-k}}        & \qw
}
\] 
\caption{Quantum circuit for $v=1$ under $M_k$ for $k=2^m$ with $m<n-1$.}
\label{fig:1}
\end{figure}
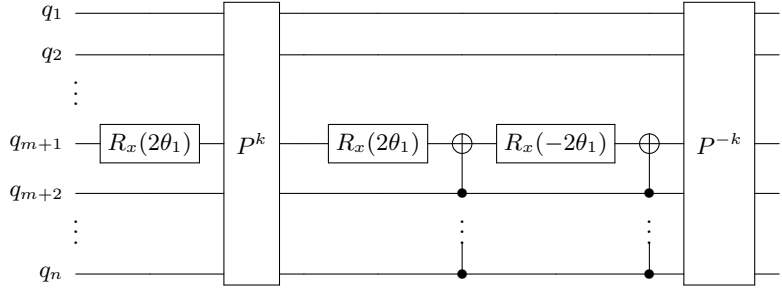 \\
For $m=n-1$, we can just assume that the second unitary used above is the identity. Thus, the circuit is simply an $R_x(2a_{2^{n-1}}t)$ applied on the last qubit $q_n$.

\subsubsection*{Circuit Synthesis from Theorem 2:} When the coefficients $a_k$ are constant over a congruence class $C_j$, the Hamiltonian term $\sigma_j = \sum_{k \in C_j} M_k$ simplifies significantly to a sum whose terms commute. The unitary evolution $U_{\sigma_j}(t) = e^{-it\sigma_j}$ can therefore be implemented exactly without Trotterization. \\
The binary constraint \(k \equiv 2^{j-1} \pmod{2^j}\) fixes the \(j\) least significant bits to \((b_j\dots b_1) = (10\dots0)\). Thus, the general term in the sum is:
\[
X^{b_n} \otimes X^{b_{n-1}} \otimes \cdots \otimes X^{b_{j+1}} \otimes X \otimes I^{\otimes (j-1)}
\]

Consider the sum over all possible combinations of $b_{j+1}, \dots, b_n$.
This is equivalent to summing over all possible tensor products where the last element is $X \otimes I^{\otimes (j-1)}$. Thus, $\sigma_j$ can be represented as :
$$ \sigma_j \;= a_{2^{j-1}} \sum_{b_n \in \{0,1\}} \dots \sum_{b_{j+1} \in \{0,1\}} \left(X^{b_n} \otimes X^{b_{n-1}} \otimes \dots \otimes X^{b_{j+1}} \otimes X \otimes I^{\otimes (j-1)}\right) $$
This sum can be factored due to the linearity of the tensor product:
$$ \sigma_j \;= a_{2^{j-1}} \left(\sum_{b_n \in \{0,1\}} X^{b_n}\right) \otimes \left(\sum_{b_{n-1} \in \{0,1\}} X^{b_{n-1}}\right) \otimes \dots \otimes \left(\sum_{b_{j+1} \in \{0,1\}} X^{b_{j+1}}\right) \otimes X \otimes I^{\otimes (j-1)} $$
Since $\sum_{b \in \{0,1\}} X^b = X^0 + X^1 = I + X$, it simplifies to:
$$ \sigma_j = a_{2^{j-1}} \; \underbrace{(I + X) \otimes \dots \otimes (I + X)}_{n-j \text{ times}} \otimes X \otimes I^{\otimes (j-1)} = a_{2^{j-1}}(I+X)^{\otimes(n-j)} \otimes X \otimes I^{\otimes(j-1)} $$ 

Using the identity $X=HZH$ and the fact that $I+X = 2H\ket{0}\bra{0}H$, we can conjugate $\sigma_j$ into a simpler diagonal form: 
\[
\sigma_j = a_{2^{j-1}} 2^{n-j} (H^{\otimes(n-j+1)} \otimes I^{\otimes(j-1)}) \left( (\ket{0}\bra{0})^{\otimes(n-j)} \otimes Z \otimes I^{\otimes(j-1)} \right) (H^{\otimes(n-j+1)} \otimes I^{\otimes(j-1)})
\]
Since the inner operator is diagonal in the computational basis, the evolution $U_{\sigma_j}(t) = e^{-it\sigma_j}$ becomes: 
\[ 
U_{\sigma_j}(t) = \mathcal{H} \; e^{ -it a_{2^{j-1}} 2^{n-j}  (\ket{0}\bra{0})^{\otimes(n-j)} \otimes Z \otimes I^{\otimes(j-1)} } \; \mathcal{H} 
\] 
where $\mathcal{H} = H^{\otimes(n-j+1)} \otimes I^{\otimes(j-1)}$. The circuit associated to the unitary $U_{\sigma_j}$ is given in Figure~\ref{fig:2} where $\phi_j = 2 t  a_{2^{j-1}} 2^{n-j} = t a_{2^{j-1}} 2^{n-j+1} $ 
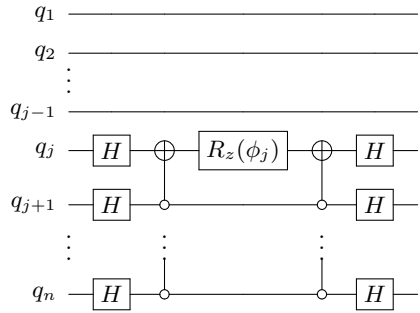
\begin{figure}
\[
\Qcircuit @C=1em @R=0.8em {
  \lstick{q_1}     & \qw      & \qw        & \qw                & \qw         & \qw & \qw \\
   & & & & & & & \\
  \lstick{q_2}     & \qw      & \qw        & \qw                & \qw         & \qw & \qw \\
    \vdots               &          & &                    &       &          &     \\
    & & & & & & & \\
  \lstick{q_{j-1}} & \qw      & \qw        & \qw                & \qw         & \qw & \qw \\
  \lstick{q_{j}}   & \gate{H} & \targ      & \gate{R_z(\phi_j)} & \targ       & \gate{H} & \qw \\
  \lstick{q_{j+1}} & \gate{H} & \ctrlo{-1} & \qw                & \ctrlo{-1}  & \gate{H}      & \qw \\
  \vdots           &          &  \vdots          &                    &  \vdots           &          &     \\
                   &          &            &                    &            &          &     \\
  \lstick{q_n}     & \gate{H} & \ctrlo{-1} & \qw                & \ctrlo{-1}  & \gate{H}      & \qw
}
\]
\caption{Quantum circuit of a sum over congruence class $C_j$}
\label{fig:2}
\end{figure}

\section{Poisson equation}
In one dimension, the continuous Poisson equation
\[
-\frac{d^2 u(x)}{dx^2} = f(x), 
\quad x\in[0,L],\quad u(0)=u(L)=0
\]
is commonly discretized \cite{4} on a uniform grid \(x_i = i\Delta x\) (\(i=0,\dots,N\)), giving the tridiagonal system
\[
\mathcal{L}\,\mathbf{u} = \mathbf{f}, 
\quad
\mathcal{L} = \frac{1}{\Delta x^2}
\begin{pmatrix}
2 & -1 &        &        \\
-1& 2  & -1     &        \\
  & \ddots &\ddots&\ddots\\
  &        & -1   & 2
\end{pmatrix}
\]
which is a symmetric banded Toeplitz matrix with bandwidth $b=1$.\\
The unitary evolution $e^{-i\mathcal{L}t}$ is equal to $e^{-i\mathcal{L}_0t}e^{-i\mathcal{L}_1t}$ since $[\mathcal{L}_0,\mathcal{L}_1]=0$ with 
\[
\mathcal{L}_0 = \frac{1}{\Delta x^2}
\begin{pmatrix}
2 & 0 &        &        \\
0& 2  & 0     &        \\
  & \ddots &\ddots&\ddots\\
  &        & 0   & 2
\end{pmatrix},
\quad
\mathcal{L}_1 = \frac{-1}{\Delta x^2}
\begin{pmatrix}
0 & 1 &        &        \\
1& 0  & 1     &        \\
  & \ddots &\ddots&\ddots\\
  &        & 1   & 0
\end{pmatrix}
\]
$e^{-i\mathcal{L}_0t}$ corresponds to a global phase, which has no effect on the circuit and can therefore be omitted in the implementation. \\
$\mathcal{L}_1$ is the matrix associated to $k=1 = 2^0$. Thus, as a result of Theorem \ref{theorem:1}
\[
\mathcal{L}_1 = \frac{-1}{\Delta x^2} \,(\, I^{\otimes (n-1)} \, \otimes \, X \,+\, P \, R_1 \, P^{-1} \,)
\]
Its unitary is then given by
\[
e^{-i\mathcal{L}_1t} \approx  \big( \; I^{\otimes (n-1)} \otimes e^{\frac{it}{v\Delta x^2}X} \; \; P \, e^{\frac{it}{v\Delta x^2}R_1} \, P^{-1} \; \big)^v
\]

\noindent Therefore, the circuit is obtained by repeating the following subroutine $v$ times where $v$ is the Trotter step: 
\[
\Qcircuit @C=1em @R=0.8em {
  \lstick{q_{1}}   & \gate{R_x(-2\frac{t}{v\Delta x^2})} & \multigate{4}{P}        & \qw &\gate{R_x(-2\frac{t}{v\Delta x^2}}  & \targ              & \gate{R_x(2\frac{t}{v\Delta x^2})} & \targ     & \multigate{4}{P^{-1}}        & \qw \\
  \lstick{q_{2}}   & \qw                                 & \ghost{P}           & \qw & \qw                          & \ctrl{-1}          & \qw                           & \ctrl{-1} & \ghost{P^{-1}}               & \qw \\
  \vdots           &                                     &                         &     &                              & \vdots             &                               & \vdots    &                              &     \\
                   &                                     &                         &     &                              &                    &                               &           &                              &     \\
  \lstick{q_n}     & \qw                                 & \ghost{P}           & \qw & \qw                          & \ctrl{-1}          & \qw                           & \ctrl{-1} & \ghost{P^{-1}}               & \qw
}
\]

\section{Conclusion}
We presented a framework for synthesizing quantum circuits to simulate time evolution under specific symmetric Toeplitz Hamiltonians via two key decomposition theorems. By breaking any such Hamiltonian into $M_k$ matrices, we have shown that this leads to generalized circuits, for $M_k$ when $k$ is a power of two, and congruence classes with constant coefficients.

The practical relevance of this approach was demonstrated by constructing a circuit for the 1D discrete Poisson equation. Future work will focus on extending this framework to general symmetric Toeplitz matrices and performing detailed benchmarking of the algorithm to analyze Trotter errors and resource overhead.

%
%
%

\begin{thebibliography}{8}
\bibitem{1}
Aram W Harrow, Avinatan Hassidim, and Seth Lloyd. Quantum algorithm for linear systems
of equations. Physical review letters, 103(15):150502, 2009.
\bibitem{2}
Remaud, M. et al.: Ancilla-Free Quantum Adder with Sublinear Depth. Springer Nature Switzerland. (2025).

\bibitem{3}
Sunheang Ty et al. Double-logarithmic
depth block-encodings of simple finite difference method’s matrices, 2024

\bibitem{4}
van Gavrilyuk, Volodymyr Makarov, and Vitalii Vasylyk. Exponentially convergent algorithms
for abstract differential equations, 2011.

\end{thebibliography}
%

\end{document}